\definecolor{mGreen}{rgb}{0,0.6,0}
\definecolor{mGray}{rgb}{0.5,0.5,0.5}
\definecolor{mPurple}{rgb}{0.58,0,0.82}
\definecolor{backgroundColour}{rgb}{0.95,0.95,0.92}
\lstdefinestyle{CStyle}{
    backgroundcolor=\color{backgroundColour},
    commentstyle=\color{mGreen},
    keywordstyle=\color{blue},
    numberstyle=\tiny\color{mGray},
    stringstyle=\color{mPurple},
    basicstyle=\footnotesize,
    breakatwhitespace=false,
    breaklines=true,
    captionpos=top,
    keepspaces=true,
    numbers=left,
    numbersep=5pt,
    showspaces=false,
    showstringspaces=false,
    showtabs=false,
    tabsize=2,
    language=C
}
\let\ts\textstyle
\def\R{\mathbb{R}}
\def\h#1{\hskip#1pt}
\def\m#1{\mskip#1mu}
\def\md{\m3|\m3}
\def\maj{\mathop{\rm maj}\nolimits}
\def\frc#1#2{{\ts{#1\over#2}}}
\def\cost{\mathop{\rm cost}\nolimits}
\def\para#1.{\ifdim\lastskip<0pt\relax\else\removelastskip\fi%
	\bigbreak\penalty-250\par\noindent{\bf#1.\h4}}
\def\th#1#2{T^{#1}_{#2}}
\def\qr{Q_\mathrm R}
\def\f#1{F^{#1}_{k,n}}
\def\summ{\textstyle\sum\nolimits}
\def\acases{\left.\begin{array}{ll}}
\def\bcases{\left\{\begin{array}{ll}}
\def\ecases{\end{array}\right.}
\def\qand{\quad\text{and}\quad}
\newtheorem{theorem}{Theorem}
\newtheorem{lemma}[theorem]{Lemma}
\newtheorem{corollary}[theorem]{Corollary}
\newtheorem{proposition}[theorem]{Proposition}
\newtheorem{definition}{Definition}
\newtheorem*{remark}{Remark}
\begin{document}

\title{Lower bounds for uniform read-once threshold formulae in the randomized decision tree model}
\author{Nikos Leonardos%
	\thanks{
		This research is co-financed by Greece and the European Union (European
		Social Fund---ESF) through the Operational Programme ``Human Resources
		Development, Education and Lifelong Learning'' in the context of the
		project ``Reinforcement of Postdoctoral Researchers---2nd Cycle''
		(MIS-5033021), implemented by the State Scholarships Foundation (IKY).
	}
	\thanks{nikos.leonardos@gmail.com}
}
\maketitle

\begin{abstract} 
	We investigate the randomized decision tree complexity of a specific class
	of read-once threshold functions. 
	A read-once threshold formula can be defined by a rooted tree, every
	internal node of which is labeled by a threshold function $\th nk$ (with
	output 1 only when at least $k$ out of $n$ input bits are 1) and each leaf
	by a distinct variable. Such a tree defines a Boolean function in a natural
	way. We focus on the randomized decision tree complexity of such functions,
	when the underlying tree is a uniform tree with all its
	internal nodes labeled by the same threshold function.
	We prove lower bounds of the form $c(k,n)^d$, where $d$ is the depth of the
	tree.
	We also treat trees with alternating levels of AND and OR gates separately
	and show asymptotically optimal bounds, extending the known bounds for the
	binary case.
\end{abstract}

\section{Introduction}
Boolean decision trees constitute one of the simplest computational models.
It is therefore intriguing when the complexity of a function is still unknown.
A notable example is the recursive majority-of-three function.
This function can be represented by a uniform ternary tree of depth $d$,
such that every internal node has three children and all leaves are on the
same level. The function computed by interpreting the tree as a circuit with
internal nodes labeled by majority gates (with output 1 only when at least two
of the three inputs are 1) is $\maj_d$, the recursive majority-of-three of
depth $d$.

This function seems to have been given by Ravi Boppana (see Example~$1.2$ in
\cite{sw86}) as an example of a function that has deterministic complexity
$3^d$, while its randomized complexity is asymptotically smaller. 
Other functions with this property are known. Another notable example is the
function $\mathrm{nand}_d$, first analyzed by Snir \cite{snir85}. This is the
function represented by a uniform binary tree of depth $d$, with the internal
nodes labeled by $\mathrm{nand}$ gates. Equivalently, the internal nodes can
be labeled by AND and OR gates, alternately at each level.

A simple randomized framework that can
be used to compute both $\maj_d$ and $\mathrm{nand}_d$ is the following. Start
at the root and, as long as the output is not known, choose a child at random
and evaluate it recursively. Algorithms of this type are called {\em
directional}. For $\maj_d$ the directional algorithm computes the output
in $(8/3)^d$ queries. It was noted by Boppana and also in \cite{sw86} that
better algorithms exist for $\maj_d$.
(See \cite{amano2010} for a more recent study of directional algorithms.)
Interestingly, Saks and Wigderson show
that the directional algorithm is optimal for the $\mathrm{nand}_d$ function,
and show that its zero-error randomized decision tree complexity is 
$\Theta\bigl(\smash{(\frc{1+\sqrt33}4)}^d\bigr)$. 
Their proof uses a bottom up induction and generalized costs. Their method of
generalized costs allows them to charge for a query according to the value of
the variable. 
In this work we use this method to show that the directional algorithm is
optimal for uniform AND-OR trees where each gate receives $n$ bits. 

More generally, there have been studied functions that can be represented by
formulae involving threshold functions as connectives.
A threshold $k$-out-of-$n$ function, denoted $\th nk$, is a Boolean function
of $n$ arguments that has value 1 if at least $k$ of the $n$ Boolean input
values are 1. A threshold formula can be defined as a rooted tree with labeled
nodes; each internal node is labeled by a threshold function and each leaf by
a variable. If each variable appears exactly once the formula is called
\emph{read-once};
if only AND and OR connectives appear, it is called a \emph{Boolean} read-once
formula.
A formula represents a Boolean function in a natural way:
the tree is evaluated as a circuit where each internal node is the
corresponding threshold gate. If the formula is read-once, the function it
represents is also called read-once.
If no OR gate is an input to another OR gate, and the same for AND gates, then
the formula is non-degenerate (see Theorem~2.2 in \cite{hnw93}) and uniquely
represents the corresponding function.
Thus, we may define the depth of $f$ as the maximum depth of
a leaf in the unique tree-representation. Note also that, for read-once
formulae, the set of variables coincides with the set of leaves in their
representation as a tree.
We prove lower bounds for the subclass that is represented by uniform trees
(full and complete trees with all their leaves at the same level and with each
internal node having the same number of children).
\paragraph{Related work and our results.}
Heiman and Wigderson \cite{hw91} managed to show that for every Boolean
read-once function $f$ we have $R(f)\in\Omega(D(f)^{0.51})$, where $R(f)$ and
$D(f)$ are the randomized and deterministic complexity of $f$ respectively.
For the subclass of such functions that their representing tree is binary, the
question was investigated more recently in \cite{amano2011}.
Heiman, Newman, and Wigderson \cite{hnw93} showed that read-once formulae with
threshold gates have zero-error randomized complexity $\Omega(N/2^d)$ (here
$N$ is the number of variables and $d$ the depth of a canonical
tree-representation of the read-once function).

Santha showed in \cite{santha95} that when considering a notion of balanced
read-once formulae with AND and OR gates, the zero-error and bounded-error
complexities are intimately related. This holds, in particular, for the class
of uniform formulae that we consider. Furthermore, it is not hard to verify
that our lower bounds for uniform threshold read-once formulae also hold for
bounded-error algorithms if multiplied by $(1-2\delta)$, where $\delta$ is the
error allowed. Consequently, we focus on zero-error complexity in this work.

With respect to specific read-once functions,
in contrast to the exact asymptotic bounds we have for $\mathrm{nand}_d$
(also in the quantum model \cite{bs04}) there
had been no progress on the randomized decision tree complexity of $\maj_d$ for
several years.
However, recent papers have narrowed the gap between the upper and lower bounds
for recursive majority.
An $\Omega\bigl((7/3)^d\bigr)$ lower bound was showed in
\cite{jks03}. Jayram, Kumar, and Sivakumar, proved their bound using tools
from information theory and a top down induction. 
Furthermore, they presented a non-directional algorithm that improves the 
$O\bigl((8/3)^d\bigr)$ upper bound.
Landau, Nachmias, Peres, and Vanniasegaram~\cite{lns2006}, showed how to remove
the information theoretic notions from the proof in \cite{jks03}, keeping its
underlying structure the same.
Magniez, Nayak, Santha, and Xiao \cite{mnsx11}, significantly
improved the lower bound to $\Omega\bigl((5/2)^d\bigr)$
and the upper bound to $O(2.64946^d)$.
Subsequently, in \cite{leonardos2013}, the lower bound was further improved to 
$\Omega(2.55^d)$, building upon the techniques of \cite{sw86}.
The bound was further improved with a computer-assisted proof in \cite{mnsstx2016}.
The currently known best lower bound is $\Omega(2.59^d)$ from \cite{gj2016}.

The above methods work by reducing the complexity of $\maj_d$ to a problem of
constant size. 
It is interesting to note that when this constant-size problem relates to 
$\maj_1$ (i.e., the majority of three bits), all these methods obtain the
$2.5^d$ lower bound. Furthermore, in order to make further progress,
analysis of analogous problems of greater depth is required. In particular,
in \cite{leonardos2013}, an intermediate problem between $\maj_1$ and $\maj_2$
is constructed and analyzed with a reasonable case analysis (without the need
of computer assistance) to obtain a $2.55^d$ lower bound;
in \cite{mnsstx2016}, the $2.55^d$ bound is surpassed by solving depth-3
and depth-4 problems using a computer;
finally, \cite{gj2016} obtains the currently best bound via the solution of
a linear program (obtained using a computer) which relates to $\maj_3$.
Pushing any of the above further seems computationally intractable and will
yield possibly modest improvements. We believe a human-readable solution can
shed more light into the problem. 

In this work we focus our attention to threshold read-once functions 
which are represented by uniform trees and each internal node is
labeled by the same threshold gate. 
Let us denote by $F^d_{\land,n}$ (resp.~$F^d_{\lor,n}$) the function
represented by a uniform tree of depth $d$ with AND and OR gates of fan-in $n$
alternating with each level and the
root labeled by an AND gate (resp.~an OR gate). 
Denote by $F^d_{k,n}$ the function represented by a uniform
tree of depth $d$ with each gate being $T_k^n$.
With respect to these classes of read-once functions we prove the following
theorems in the model of randomized decision trees.
(See Section~2 for relevant definitions.)

\begin{theorem}\label{thm:andor}
	The directional algorithm is optimal for 
	$F^d_{\wedge,n}$ and $F^d_{\vee,n}$. In particular, for even $d$,
	\\[-10pt]
	\[
		R(F^{d}_{\wedge,n}),R(F^d_{\vee,n})=\Theta\Biggl(\Biggl[
			\frac{n-1}4\Biggl(1+\sqrt{1+\frac{16n}{(n-1)^2}}\Biggr)
		\Biggr]^{d}\Biggr).\]
\end{theorem}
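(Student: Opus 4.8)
The two bounds are proved by separate arguments: the upper bound by analyzing the directional algorithm and solving the resulting linear recurrence, and the lower bound by the generalized--cost technique of Saks and Wigderson~\cite{sw86}. The device that makes the constant come out exactly as stated is to exploit the De~Morgan duality between $F^d_{\wedge,n}$ and $F^d_{\vee,n}$: instead of a single quantity one tracks the \emph{pair} (cost to certify output $0$, cost to certify output $1$), which satisfies a two--step recurrence, equivalently a $2\times2$ linear map whose Perron eigenvalue will turn out to be exactly $c(n):=\bigl(\frac{n+1}2\bigr)^2+\frac{2n}{1+\sqrt{1+16n/(n-1)^2}}$.

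\textbf{Upper bound.}
Let $P_d$ (resp.\ $Q_d$) be the worst case, over $1$--inputs (resp.\ $0$--inputs), of the expected number of queries of the directional algorithm on $F^d_{\wedge,n}$; by duality the corresponding quantities for $F^d_{\vee,n}$ are $Q_d$ and $P_d$. On a $1$--input of $F^d_{\wedge,n}$ the algorithm must fully evaluate all $n$ child $\vee$--subtrees, each a worst $1$--input, so $P_d=nQ_{d-1}$. On a $0$--input the adversary's best choice is to plant exactly one $0$--child; since the algorithm evaluates children in random order and stops at the first $0$--child, it fully evaluates on average $\frac{n-1}2$ of the $1$--children before reaching and evaluating the $0$--child, giving $Q_d=\frac{n-1}2Q_{d-1}+P_{d-1}$ (two or more $0$--children only lower the cost). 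Eliminating $P$ gives $Q_d=\frac{n-1}2Q_{d-1}+nQ_{d-2}$, whose dominant root $x$ solves $x^2-\frac{n-1}2x-n=0$; rationalizing the radical and using $\sqrt{(n-1)^2+16n}=\sqrt{n^2+14n+1}$ one checks $x^2=c(n)$, so $P_d,Q_d=\Theta(x^d)=\Theta\bigl(c(n)^{d/2}\bigr)$.

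\textbf{Lower bound.}
Assign to each leaf a pair of generalized costs and let $R_c(\cdot)$ be the induced zero--error complexity, with $R^b_c(\cdot)$ its restriction to $b$--inputs (worst case), as in~\cite{sw86}; all costs $1$ recovers $R(\cdot)$, so it suffices to lower bound $R^0_{\mathbf 1}$ or $R^1_{\mathbf 1}$. Two ingredients are needed. First, a \emph{single--gate bound}: for $\th nn$ with uniform leaf costs $s$ (for a $0$) and $t$ (for a $1$) one has $R^1\ge nt$ and $R^0\ge\frac{n-1}2t+s$, dually $R^0\ge ns$, $R^1\ge\frac{n-1}2s+t$ for $\th n1$; these follow from Yao's minimax principle against the distribution that plants the unique ``wrong'' leaf in a uniformly random position. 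Second, a \emph{composition lemma} in the spirit of~\cite{sw86}: in a read--once composition the generalized complexities $R^0,R^1$ of a subtree may be substituted for the $0$-- and $1$--costs of the corresponding leaf of the parent gate. Writing $p_d=R^1_{\mathbf 1}(F^d_{\wedge,n})$, $q_d=R^0_{\mathbf 1}(F^d_{\wedge,n})$ and the dual quantities for $F^d_{\vee,n}$, composing two levels gives
\[
\begin{pmatrix}p_d\\ q_d\end{pmatrix}\ \ge\ M\begin{pmatrix}p_{d-2}\\ q_{d-2}\end{pmatrix},\qquad
M=\begin{pmatrix} n & \tfrac{n(n-1)}2\\[4pt] \tfrac{n-1}2 & \tfrac{(n-1)^2}4+n\end{pmatrix},\qquad p_0=q_0=1 .
\]
Now $\operatorname{tr}M=\frac{n^2+6n+1}4$ and $\det M=n^2$, so the top eigenvalue of $M$ is $\frac12\operatorname{tr}M+\frac12\sqrt{\operatorname{tr}^2M-4\det M}$, which, using $(n^2+6n+1)^2-64n^2=(n-1)^2(n^2+14n+1)$, simplifies to exactly $c(n)$. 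Since $M$ has positive entries and $(p_0,q_0)$ is positive, Perron--Frobenius yields $p_d,q_d=\Omega\bigl(c(n)^{d/2}\bigr)$, hence $R(F^d_{\wedge,n}),R(F^d_{\vee,n})=\Omega\bigl(c(n)^{d/2}\bigr)$, matching the upper bound.

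\textbf{Main obstacle.}
The eigenvalue and recurrence computations are routine; the crux is the composition lemma, where one must argue that a randomized algorithm for the composed tree is forced to ``pay in full'' for each subtree whose value it needs, and that it needs enough of them. The difficulty is adaptivity: the algorithm may probe several subtrees partially, interleaving them, without completing any single one, so the charging must be done carefully by adapting the adversary argument of~\cite{sw86} (revealing the hidden input gradually and charging a subtree only once its value is actually determined). A secondary point is to verify that the $\frac{n-1}2$ factor in the single--gate bound is tight and is carried through the recursion with no loss of constant --- which is precisely what the positivity of $M$, via Perron--Frobenius, guarantees.
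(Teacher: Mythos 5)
Your upper bound, your two-level matrix, and all of the algebra are exactly what the paper does: your $M$ equals the paper's product $AB=\Gamma_{n,n}\Gamma_{1,n}$, and your single-gate constants ($nt$ to certify a $1$ at an AND gate, $\frac{n-1}2t+s$ to certify a $0$) are precisely the entries the paper derives from $P(n,n)=0$ and $P(1,n)=(n-1)/2$. The eigenvalue computation and the Perron--Frobenius step are fine, as is the directional-algorithm analysis (your scalar recurrence $Q_d=\frac{n-1}2Q_{d-1}+nQ_{d-2}$ is the paper's matrix recurrence in disguise). So the skeleton is the right one.

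The gap is the composition lemma, which you correctly identify as the crux and then do not prove. ``Substitute the subtree complexities $R^0,R^1$ for the leaf costs of the parent gate'' is not a black-box fact: composition of randomized query complexity is open in general, and even for read-once formulas the statement has to be formulated carefully before it is true. Your proposed route --- a worst-case adversary that reveals the input gradually and ``charges a subtree only once its value is determined'' --- runs into exactly the adaptivity issue you flag: the algorithm may do substantial work in many subtrees without completing any of them, and a charging scheme that only pays out on completion does not obviously account for that work; moreover one must exhibit a single hard distribution under which the per-subtree lower bounds and the $\frac{n-1}2$ counting bound hold simultaneously. The paper resolves this by arguing distributionally from the start: it fixes the reluctant distribution $\mu$, builds from any algorithm $Q$ for $F^d$ an explicit algorithm $Q'$ for $F^{d-1}$ (planting the new leaf's value into a random position of a random reluctant input to the shrunk gate), proves the exact identity $\cost_{\mu'}(Q';c')=\cost_\mu(Q;\psi)$ for an explicit cost function $\psi$ supported on the shrunk leaves, and thereby reduces the desired inequality $\cost_\mu(Q;c)\ge\cost_{\mu'}(Q';c')$ to showing that every nonempty decision tree over $\binom{[n]}{k}$ has nonnegative cost under the local cost function $(1,-\eta)$ --- the quantity $P_\eta(k,n)$. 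For AND/OR gates that last step is exactly your single-gate computation, but embedded in a framework where the composition comes for free. Until you supply this reduction (or an equivalent substitute), the lower-bound half of the theorem is not established.
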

\begin{theorem}\label{thm:directional}
	For $1<k<n$,
	\[
		R(F^d_{k,n})=O\Biggl(\Biggl[
			\frac{n+1}2\Biggl(1+\sqrt{1-\frac{8k(n-k+1)}{(n-k+2)(k+1)(n+1)}}
		\Biggr)\Biggr]^d\Biggr)
	.\]
\end{theorem}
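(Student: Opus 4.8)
The plan is to analyze the directional algorithm on $F^d_{k,n}$ and show that its worst‑case expected cost is $O(\lambda^d)$, where $\lambda$ denotes the quantity in the statement; since the directional algorithm is a zero‑error randomized procedure, this immediately bounds $R(F^d_{k,n})$ from above. Recall how the directional algorithm treats a $\th nk$ gate: it inspects the $n$ subtrees in a uniformly random order, recursing into each, and halts as soon as the value of the gate is forced, i.e.\ as soon as $k$ of the subtrees have returned $1$ or $n-k+1$ of them have returned $0$.

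The heart of the argument is a two‑variable induction on the depth. Let $a_d$ (resp.\ $b_d$) be the maximum, over inputs $x$ with $F^d_{k,n}(x)=1$ (resp.\ $F^d_{k,n}(x)=0$), of the expected number of variables queried by the directional algorithm; clearly $a_0=b_0=1$. For the inductive step, fix an input and let $j$ be the number of subtrees of the root that evaluate to $1$. Since the set of subtrees actually queried is a deterministic function of the random permutation and of the $0/1$ pattern at the children — independent of the internal contents of the subtrees — linearity of expectation gives that the expected cost is at most $(\text{expected number of queried }1\text{-subtrees})\cdot a_{d-1}+(\text{expected number of queried }0\text{-subtrees})\cdot b_{d-1}$. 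For a $1$‑input ($j\ge k$) the algorithm stops at the $k$‑th $1$‑subtree, hence queries exactly $k$ of them, while a routine symmetry argument shows that a given one of the $n-j$ zero‑subtrees precedes the $k$‑th $1$‑subtree with probability $k/(j+1)$; both quantities are maximized at $j=k$, giving $a_d\le k\,a_{d-1}+\frac{k(n-k)}{k+1}\,b_{d-1}$. Symmetrically, for a $0$‑input the worst case is $j=k-1$, and one obtains $b_d\le (n-k+1)\,b_{d-1}+\frac{(k-1)(n-k+1)}{n-k+2}\,a_{d-1}$.

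It then remains to solve this linear recursion. Writing $\bigl(\begin{smallmatrix}a_d\\ b_d\end{smallmatrix}\bigr)\le M\bigl(\begin{smallmatrix}a_{d-1}\\ b_{d-1}\end{smallmatrix}\bigr)$ (entrywise, with $M$ nonnegative), one computes $\operatorname{tr}M=k+(n-k+1)=n+1$ and, after simplifying the numerator of $\det M$ — with $m=n-k$ it collapses as $(k+1)(m+2)-m(k-1)=2(k+m+1)=2(n+1)$ — that $\det M=\frac{2k(n-k+1)(n+1)}{(k+1)(n-k+2)}$. Consequently the dominant eigenvalue of $M$ is exactly $\frac{n+1}{2}+\frac{n+1}{2}\sqrt{1-\frac{8k(n-k+1)}{(n-k+2)(k+1)(n+1)}}=\lambda$. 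Because $M$ has strictly positive entries and (one checks, for $1<k<n$, that the discriminant stays strictly positive) distinct real eigenvalues, $\max(a_d,b_d)=\Theta(\lambda^d)$, and therefore $R(F^d_{k,n})\le\max(a_d,b_d)=O(\lambda^d)$.

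The main obstacle — really the only place requiring care — is the reduction to the two‑variable recursion: one must justify that the worst‑case input for the whole tree is obtained by simultaneously making the top gate worst‑case (exactly $k$, resp.\ $k-1$, satisfied subtrees) and each relevant subtree worst‑case for its own value. The feature that makes this clean is precisely that the random set of queried subtrees does not depend on subtree contents, so one may first take expectation over the permutation with subtree costs capped at $a_{d-1}$ and $b_{d-1}$, and only afterwards optimize over the number $j$ of satisfied children; no interaction between "which children are satisfied" and "how expensive those children are" can hurt us. A secondary technical point is checking that the expression under the square root is strictly positive for all $1<k<n$, so that the stated $O(\lambda^d)$ bound carries no spurious polynomial‑in‑$d$ factor.
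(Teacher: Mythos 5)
Your proposal is correct and follows essentially the same route as the paper: it analyzes the directional algorithm via the two-variable recurrence $\Phi_d,\Psi_d$ (your $a_d,b_d$) with exactly the matrix $\Delta_{k,n}$ of Equation~(\ref{eq:recurrence}), and extracts the same dominant eigenvalue. Your added care in justifying that the worst case occurs at $j=k$ (resp.\ $j=k-1$) children evaluating to $1$, and in decoupling the permutation from the subtree costs, is a welcome elaboration of a step the paper leaves implicit, but it is not a different argument.
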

\begin{theorem}\label{thm:main}
	For $1<k<n$,
	\[
		R(F^d_{k,n})=\Omega\Biggl(\Biggl[
			\frac{n+1}2\Biggl(1+\sqrt{1-\frac{3k(n-k+1)+n}{(n+1)^2}}
		\Biggr)\Biggr]^d\Biggr).\]
\end{theorem}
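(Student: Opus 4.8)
The plan is to prove this lower bound by the generalized-cost technique of Saks and Wigderson \cite{sw86}, via a bottom-up induction on the depth $d$, in the spirit of \cite{leonardos2013}. By Yao's minimax principle for zero-error algorithms, $R(F^d_{k,n})=\max_\mu\min_T\mathbb E_{x\sim\mu}[\cost_T(x)]$, where $T$ ranges over deterministic decision trees computing $F^d_{k,n}$ correctly on every input. First I would replace query counting by a \emph{charging scheme}: a query to a leaf answered $b\in\{0,1\}$ is charged a weight $w^b\ge 0$ (a single pair $(w^0,w^1)$ suffices by uniformity), and after rescaling so that $\max(w^0,w^1)=1$ the charged cost never exceeds the number of queries; hence $R(F^d_{k,n})\ge\max_\mu\min_T\mathbb E_\mu[\,w\text{-}\cost_T(x)\,]$. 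The goal is to choose $\mu$ recursively (its only free parameter being a bias, see below) together with $(w^0,w^1)$ so that the right-hand side is $\Omega(\lambda^d)$ for the $\lambda$ in the statement.

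Next I would define the hard distribution $\mu=\mu_d$ recursively on the uniform $\th nk$-tree so that it factors through the $n$ subtrees of each gate and is tightly controlled conditionally on the gate value. A natural candidate: at every gate flip a biased coin; on one outcome a uniformly random set of exactly $k$ children is made ``$1$-type'' — each recursively fed the value-$1$ conditional distribution — and the other $n-k$ are ``$0$-type''; on the other outcome exactly $k-1$ children are $1$-type; the coin bias is the parameter to be tuned. For a height-$h$ subtree, let $c^1_h$ (resp.\ $c^0_h$) be the minimum expected $w$-cost of a zero-error algorithm certifying that subtree's value, under $\mu_h$ conditioned on that value being $1$ (resp.\ $0$); the base case is $c^b_0=w^b$. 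Splitting the expectation over the root value and using $\min_T(a_T+b_T)\ge\min_T a_T+\min_T b_T$ gives $R(F^d_{k,n})\ge\Pr[F^d_{k,n}=1]\,c^1_d+\Pr[F^d_{k,n}=0]\,c^0_d$, so it suffices to lower-bound $c^1_d$ and $c^0_d$; and the recursive design of $\mu$ makes the distribution induced on the children of a gate (conditioned on its value) exactly a mixture of products of the conditional distributions one level down.

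The technical heart is a two-dimensional linear recursion $c^1_h\ge A\,c^1_{h-1}+B\,c^0_{h-1}$ and $c^0_h\ge C\,c^1_{h-1}+D\,c^0_{h-1}$ with nonnegative coefficients $A,B,C,D$ depending on $n,k$, the coin bias and $(w^0,w^1)$. Proving it requires a reluctant-adversary/potential argument at the root $\th nk$ gate: a zero-error algorithm certifies the gate is $1$ only after certifying $k$ children to value $1$, and certifies it is $0$ only after certifying $n-k+1$ children to value $0$, and since $\mu$ makes exactly $k$ (resp.\ $n-k+1$) children carry the relevant value, the two ``diagonal'' counts are exactly $k$ and $n-k+1$, forcing $A+D=n+1$. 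The main obstacle is controlling the off-diagonal terms: the algorithm need not finish one child before touching another — it may interleave queries and, on partial evidence, adaptively \emph{choose} which children to drive to $0$ and which to $1$, so a naive count of the wrong-value children met before a certificate completes overestimates the cost (it would only reproduce the directional bound). This adaptive-selection effect is precisely what the generalized weights are designed to absorb, and the delicate part is to show that at the optimal choice of the coin bias and of $(w^0,w^1)$ one still obtains the recursion with $AD-BC=\tfrac14\bigl(3k(n-k+1)-n\bigr)$ after normalization, together with verifying admissibility $0\le w^b\le 1$. This optimization, and pinning the cross-terms to exactly this value, is where essentially all the work lies.

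Granting the recursion, the $2\times 2$ coefficient matrix has trace $n+1$ and determinant $\tfrac14\bigl(3k(n-k+1)-n\bigr)$, so its dominant eigenvalue is precisely $\lambda=\frac{n+1}2+\frac{n+1}2\sqrt{1-\frac{3k(n-k+1)-n}{(n+1)^2}}$. For $1<k<n$ the matrix is entrywise positive, so by Perron--Frobenius, iterating from the positive initial vector $(c^1_0,c^0_0)=(w^1,w^0)$ yields $c^1_d,c^0_d=\Theta(\lambda^d)$. Combined with the inequality $R(F^d_{k,n})\ge\Pr[F^d_{k,n}=1]\,c^1_d+\Pr[F^d_{k,n}=0]\,c^0_d$, this gives $R(F^d_{k,n})=\Omega(\lambda^d)$, as claimed; a consistency check against Theorem~\ref{thm:andor} at the extreme values of $k$ confirms the shape of $\lambda$.
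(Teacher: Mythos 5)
Your proposal correctly identifies the overall architecture the paper uses --- generalized costs \`a la Saks--Wigderson, a level-by-level reduction yielding a $2\times2$ linear recursion with trace $n+1$, and the eigenvalue computation at the end --- but it has a genuine gap exactly where you say ``essentially all the work lies.'' You assert, without argument, that the off-diagonal coefficients can be pinned down so that the determinant equals $\tfrac14\bigl(3k(n-k+1)-n\bigr)$; nothing in the proposal derives these coefficients or even explains what quantity they are the solution of. The paper makes this precise by isolating a concrete extremal problem: for the uniform distribution $\nu_{k,n}$ on $\X_{k,n}={[n]\choose k}$ and the local cost function $c_\eta=(1,-\eta)$ (charge $1$ for reading a $0$, charge $-\eta$ for reading a $1$), it defines $P(k,n)$ as the largest $\eta$ for which every non-empty decision tree still has nonnegative expected cost. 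The shrinking step then charges the parent gate via the matrix $\Gamma_{k,n}=\bigl(\begin{smallmatrix}k&\alpha k\\ \beta(n-k+1)&n-k+1\end{smallmatrix}\bigr)$ with $\alpha=P(k,n)$, $\beta=P(n-k+1,n)$, and the validity of the charge (your ``absorbing the adaptive interleaving'') is exactly the statement $P_\eta(k,n)\ge0$. This is precisely the step that rules out the adaptive strategies you worry about, and it is not free: the paper needs a sequence of structural lemmas about optimal trees over $\X_{k,n}$ (monotonicity of $P_\eta$ in $n$, the fact that an optimal tree stops after reading a $1$, the reduction $P(k,n)\ge\frac{k-1}{k}P(k-1,n-1)$, and the base case $P(1,n)=(n-1)/2$) to conclude $P(k,n)\ge\frac{n-k}{2k}$. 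Substituting $\alpha\ge\frac{n-k}{2k}$ and $\beta\ge\frac{k-1}{2(n-k+1)}$ is what produces the determinant in the theorem; your value of $AD-BC$ is an output of this analysis, not an input one may postulate.

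Two smaller points. First, your framework drifts between a bottom-up cost-function induction and a top-down conditional-cost recursion ($c^1_h,c^0_h$ with the $\min$-superadditivity split); the paper works purely bottom-up with cost functions, which is what lets it localize the hard step to decision trees over a single block of $n$ siblings, via the auxiliary cost function $\psi$ and Proposition~\ref{prop:costdiff}. Second, your remark that zero-error certification ``forces $A+D=n+1$'' is the right intuition for the diagonal entries $k$ and $n-k+1$, but as stated it is an assertion about all adaptive algorithms and also needs the $P(k,n)$ machinery (or an equivalent) to be turned into an inequality on expected costs. As it stands, the proposal is a correct plan with the central lemma missing.
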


Note that the last two expressions are symmetric in $k$ around $(n+1)/2$,
reflecting the fact that $R(F^d_{k,n})=R(F^d_{n+1-k,n})$ (as can be seen by
negating and pushing the negations down to the leaves).
Setting $n=2$ in Theorem~\ref{thm:andor} we retrieve the bound of Saks and
Wigderson for $\mathrm{nand}_d$.
Similarly, for $k=2$ and $n=3$, Theorems~\ref{thm:directional} and
\ref{thm:main} give known bounds for $\maj_d$: the $(8/3)^d$ upper bound of
the directional algorithm and the $2.5^d$ lower bound respectively. 
We expect neither the upper bound nor the lower bound to be optimal.
Our interest is mostly in Theorem~\ref{thm:main} and also
Theorem~\ref{thm:andor}.
Theorem~\ref{thm:directional} is obtained by a straightforward analysis of the
directional algorithm and is stated mostly for reference and comparison with
the bound of Theorem~\ref{thm:main}.

We note that one could obtain similar bounds using the methodology of
\cite{jks03,mnsstx2016} or \cite{gj2016}. In particular, formal descriptions
of feasible solutions to the corresponding linear programs defined in
\cite{gj2016}, can be used along with their Composition Theorem to obtain
lower bounds for the class of functions we consider.

\section{Definitions and notation}
In this section we introduce basic concepts related to decision tree
complexity and the tree-functions that we consider.
The reader can find a more complete exposition of decision tree complexity in
the survey of Buhrman and de~Wolf~\cite{bw2002}.

\subsection{Definitions pertaining to trees}
For a rooted tree $T$,
the {\em depth} of a node is the number of edges on the path to the root; 
the {\em depth} of the tree is the maximum depth of a leaf;
the {\em level} $i$ of a tree consists of all nodes of depth $i$;
the {\em height} of a node is the number of edges on the longest path between
the node and any descendant leaf.
We call a tree {\em uniform} if all the leaves are on the same level and all
internal nodes have the same number of children.

Consider the uniform tree of depth $d$ in which every internal node has $n$
children and is labeled by $\th nk$ (the $k$-out-of-$n$ threshold gate) with
$1<k<n$. 
We denote both the tree and the corresponding read-once threshold function 
by the same symbol $\f d$ as it will be clear from the context what it
refers to.

For the case of AND or OR gates, we consider uniform trees with the root
labeled by an AND (resp.\ OR) gate and subsequent levels of the tree are
labeled by OR (resp.\ AND) and AND (resp.\ OR) gates alternately. When each
gate receives $n$ inputs, we denote
these trees by $F_{\wedge,n}^d$ and $F_{\vee,n}^d$, when the root is labeled
by an AND and OR respectively. 

The inputs considered hard for the functions we consider are the
\emph{reluctant} inputs (\cite{sw86}).
Call an input to a $T_k^n$-gate {\em reluctant}, if it has exactly $k$ or
$k-1$ ones.
Call an input to a threshold read-once formula {\em reluctant}, if it is such
that the input to every gate is reluctant.
A {\em reluctant distribution} has only reluctant inputs in its support.

We will consider trees that have all leaves in the two greatest levels. 
We now define the reluctant distribution over such trees that we will work
with.
We focus on {\em positive inputs} only; i.e., inputs that evaluate to true.
The definition is recursive. If the tree consists of a single variable, then
all the mass in on 1. Otherwise, consider a tree $T$ of depth $d$ with all
leaves in levels $d-1$ and $d$ (level $d-1$ might not have any leaves if $T$
is uniform). Obtain a tree $T'$ by removing all the
children of an internal node $u$ with label $\th nk$ in level $d-1$.
Let $\mu'$ denote the reluctant distribution for $T'$.
Consider any reluctant input $x$ to $T$ and let $b$ be the value that $u$ (as
a gate) outputs.
Let $x'$ be the corresponding (reluctant) input to $T'$; i.e., $x'$ assigns
$b$ to $u$ and agrees with $x$ everywhere else.
Define $\mu(x)=\mu'(x')/{n\choose k}$ if $b=1$ and $\mu(x)=\mu'(x')/{n\choose
k-1}$ if $b=0$. The reluctant distribution for $F{}^d_{k,n},$
$F{}^d_{\wedge,n}$, and
$F{}^d_{\vee,n}$, is denoted $\mu_{k,n}^d$, $\mu_{\wedge,n}^d$, and
$\mu_{\vee,n}^d$, respectively.
Note that although we will restrict the analysis over positive inputs, the
algorithms are still required to be zero-error over any input.
\subsection{Definitions pertaining to decision trees}
A {\em deterministic Boolean decision tree} $Q$ over $\{0,1\}^n$
is a rooted and ordered binary tree.
Each internal node is labeled by
$i\in[n]=\{1,2,\dots,n\}$
and each leaf by a value from $\{0,1\}$.
An {\em assignment} or an {\em input} is a member of $\{0,1\}^n$.
The output $Q(x)$ of $Q$ on an input $x$ is defined recursively
as follows. Start at the root and let its label be $i$. If
$x_i=0$, we continue with the left child of the root; if $x_i=1$, we
continue with the right child of the root. We continue recursively until we
reach a leaf. We define $Q(x)$ to be the label of that leaf. When we reach
an internal node, we say that $Q$ {\em queries} 
the corresponding variable and {\em reads} its value.
We say that $Q$ {\em computes} a Boolean function
$f:\{0,1\}^n\to\{0,1\}$, if for all $x\in\{0,1\}^n$,
$Q(x)=f(x)$.
The {\em cost of $Q$ on input $x$},
$\cost(Q;x)$, is the number of variables queried when the input is
$x$. The {\em cost of $Q$}, $\cost(Q)$, is its {\em depth}, the maximum
distance of a leaf from the root.
The {\em deterministic complexity}, $D(f)$, of a Boolean function $f$ is the
minimum cost over all Boolean decision trees that compute $f$.

A {\em randomized Boolean decision tree} $Q_{\mathrm R}$ is a distribution $p$ over
deterministic decision trees. On input $x$, a deterministic decision tree
is chosen according to $p$ and evaluated. The {\em cost of $Q_{\mathrm R}$ on
input $x$} is $\cost(Q_{\mathrm R};x)=\summ_{Q}p(Q)\cost(Q;x)$.
The {\em cost} of $Q_{\mathrm R}$, $\cost(Q_{\mathrm R})$, is
$\max_x\cost(Q_{\mathrm R};x)$. 
A randomized decision tree $Q_{\mathrm R}$ {\em computes} a Boolean function
$f$ (with zero error), if $p(Q)>0$ only when $Q$ computes $f$. 
The {\em randomized complexity}, $R(f)$, of a Boolean function $f$ is the
minimum cost over all randomized Boolean decision trees that compute $f$.

We are going to take a distributional view on randomized algorithms. Let $\mu$
be a distribution over $\{0,1\}^n$ and $Q_{\mathrm R}$ a randomized decision tree.
The {\em expected cost of $Q_{\mathrm R}$ under $\mu$} is 
$\cost_\mu(Q_{\mathrm R})=\summ_x\mu(x)\cost(Q_{\mathrm R};x).$
The {\em expected complexity under $\mu$}, $R_\mu(f)$, of a
Boolean function $f$ is the minimum expected cost under $\mu$ of any
randomized Boolean decision tree that computes $f$.
Clearly, $R(f)\ge R_\mu(f)$, for any $\mu$, and thus we can prove lower bounds
on randomized complexity by providing lower bounds for the expected
cost under any chosen distribution.

\section{The method of generalized costs for read-once threshold formulae}

Our goal is to prove a lower bound on the expected cost (under the reluctant
distribution we defined in the previous section) of any randomized decision
tree $Q_{\mathrm R}$ that computes a uniform threshold read-once function. The
high-level outline of our proof is as follows.
Given any decision tree $Q_{\mathrm R}$ computing the function $F^d_{k,n}$, we
define a randomized decision tree $Q_{\mathrm R}'$ that computes $\f{d-1}$.
The crucial part is to show that we may charge $Q_{\mathrm R}'$ more for each
query, while maintaining that the cost of $\qr'$ is upper bounded by that of
$\qr$.
Applying this step repeatedly reduces our task to determining the complexity
of a function on a single variable with a high cost to query it. To implement
this plan we utilize the method of generalized costs of Saks and Wigderson
\cite{sw86}. 

The reduction from $F^d_{k,n}$ to $F^{d-1}_{k,n}$ proceeds inductively and in
several steps. 
In each step we shrink to a leaf the parent of $n$ leaves that are siblings.
That is, we start with a function $F(x,y)$, where $x\in\{0,1\}^n$ (recall that
$n$ is the fan-in of $\th kn$), and reduce
its complexity to a function $F'(u,y)$, where $u\in\{0,1\}$, such that
$F(x,y)=F'(\th nk(x),y)$.
Since we only need one query to discover the value of $u$ in $F'$, but at least $k$
in $F$, we should be able to charge more an algorithm computing $F'$ for
querying $u$, than an algorithm computing $F$ for querying any of the bits in
$x$. This is captured by the introduction of cost functions, discussed next.
Continuing this way, all nodes in level $d-1$ will become leaves and the
uniform tree of depth $d$ that we started with has been transformed to one of
depth $d-1$.
\subsection{Cost functions}
Define a {\em cost-function} over $\{0,1\}^N$
to be a function $\phi:\{0,1\}^N\times[N]\to\R$. We extend
the previous cost-related definitions as follows.
The cost of a decision tree $Q$ under cost-function $\phi$ on input
$x\in\{0,1\}^N$ is
\[
	\cost(Q;\phi;x)=\summ_{i\in S}\phi(x;i),\quad
	\hbox{where $S=\{i\md\hbox{$i$ is queried by $Q$ on input $x$}\}$.}
\]
The cost of a randomized decision tree $Q_{\mathrm R}$ on input $x$ under
cost-function $\phi$ is
\[
	\cost(Q_{\mathrm R};\phi;x)=\summ_Qp(Q)\cost(Q;\phi;x)
,\]
where $p$ is the corresponding distribution over deterministic decision trees.
Finally, the expected cost of a randomized decision tree $Q_{\mathrm R}$ under
cost-function $\phi$ and distribution $\mu$ is 
\[
	\cost_\mu(Q_{\mathrm R};\phi)=\summ_x\mu(x)\cost(Q_{\mathrm R};\phi;x)
.\]

We will mostly work with simple cost functions (as was also the
case in \cite{sw86}), in which $\phi(x,i)$ depends only on $x_i$ and $i$. In
particular, we may express such a function with a pair $c=(c_0,c_1)$, where
$c_0,c_1$ are functions mapping each variable to a real number and
$c_{x_i}(i)=\phi(x,i)$. We call such cost functions \emph{local}.
Defining $\cost(Q;c;x)=\summ_{i\in S}c_{x_i}(i)$, for a local cost-function
$c$, we use both expressions 
interchangeably in the above
quantities.
\subsection{The method and some preliminary definitions} 
Consider a tree $T$ of depth $d+1$ such that all internal nodes have degree
$n$ and all leaves are on levels $d$ and $d+1$. If we treat every internal
node as a $\th nk$-gate, this tree represents a function $F$ and has an
associated reluctant distribution $\mu$.
Suppose $Q$ is a randomized decision tree that computes $F$ and
$c=(c_0,c_1)$ a local cost-function.
We define a process that shrinks $T$ to a smaller tree $T'$ (of depth $d$ or
$d+1$) and also a corresponding randomized decision tree $Q'$ that
computes the function $F'$ represented by $T'$.
The crucial part is to show that for a ``more expensive'' local cost-function
$c'$ and the reluctant distribution $\mu'$ for $F'$,
\begin{equation}\label{eq:costineq}
	\cost_\mu(Q;c)\ge\cost_{\mu'}(Q';c')
.\end{equation}

The main ingredient in this framework is the shrinking process. It entails
removing $n$ leaves, which we identify with $[n]$,
so that their parent $u$ would become a leaf in $T'$. 
Given an algorithm $Q$ for $F$ we obtain an algorithm $Q'$ for $F'$ as
follows. 
On input $(x_u,y)$, $Q'$ first chooses uniformly at random 
$z\in{[n-1]\choose k-1}$ and $i\in[n]$ and then simulates $Q$ on
$x=(z_1,\dots,z_{i-1},x_u,z_i,\dots,z_{n-1},y)$.
Clearly, if $Q$ computes $F$, then $Q'$ is a randomized decision tree that
computes $F'$.

Our goal is to determine the ``most expensive'' cost function $c'$ for $T'$
for which we can argue (\ref{eq:costineq}).
To that end, it will be useful to express $\cost_{\mu'}(Q';c')$ in terms of
$Q$ and $\mu$. 
For an input $x$ for $F$ and a leaf $i$ of $T$, let $b=\th nk(x_1,\dots,x_n)$
and define a cost function $\psi$ for $F$ as follows. 
\begin{equation}\label{equ.psistar}
	\psi(x;i)=\bcases
		c'_{x_i}(i),			&\text{if $i\notin[n]$};\cr
		c'_1(u)/k, 				&\text{if $i\in[n]$ and $x_i=b=1$};\cr
		c'_0(u)/(n-k+1), 	&\text{if $i\in[n]$ and $x_i=b=0$};\cr
		0, 								&\text{if $i\in[n]$ and $x_i\ne b$}.\cr
	\ecases
\end{equation}
Informally, $\psi$ agrees with $c'$ outside $[n]$, is zero when $Q'$ will never
query $i$, and in the rest of the cases has value so that
the following proposition holds.
\begin{proposition}\label{prop:costdiff}
	$\cost_{\mu'}(Q';c')=\cost_\mu(Q;\psi)$.
\end{proposition}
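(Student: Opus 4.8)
The plan is to compute the right-hand side by conditioning on the random choices made in the definition of $Q'$ and comparing the expected costs term by term. Recall that on input $\sigma'u$, the algorithm $Q'$ picks $x\in\binom{[n-1]}{k-1}$ and $i\in[n]$ uniformly and runs $Q$ on the assignment $\sigma=\sigma'x_1\cdots x_{i-1}ux_i\cdots x_{n-1}$ to the variables of $T$. So $\cost_{\mu'}(Q';c')$ is the expectation, over a reluctant $\sigma'u\sim\mu'$ and over the internal randomness $(x,i)$ of $Q'$ and the randomness of $Q$, of $\sum_{z\in S(\sigma)}c'_{\sigma'_z}(z)$ where $S(\sigma)$ is the set of variables $Q$ queries on $\sigma$, with $u$'s contribution being $c'_{\sigma'_u}(u)$ if $u$ (equivalently the leaf-to-be) is queried. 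The first observation is that the induced distribution on the full assignment $\sigma$ to the leaves of $T$, obtained by drawing $\sigma'u\sim\mu'$, then $x,i$, then placing $u$ into position $i$ among the leaves $V$, is exactly the reluctant distribution $\mu$ on $T$: the sub-block on $V$ is a uniformly random reluctant input to the $T_k^n$-gate at $u$ (namely a random element of $\binom{[n]}{k-1}$ or $\binom{[n]}{k}$ according to whether $\sigma'_u=b$ equals $0$ or $1$), and it is independent of $\sigma'$ on the remaining leaves, which is itself reluctant-distributed for $T$ restricted to those leaves. This identification is the structural heart of the argument and should be stated and checked carefully; it is where the specific combinatorial shapes $\binom{[n]}{k-1}$, $\binom{[n]}{k}$, and $\binom{[n-1]}{k-1}$ all have to line up, and I expect this bookkeeping to be the main obstacle — in particular, one must verify that the marginal on $\sigma'_u=b$ coming from $\mu'$ matches the marginal of $\th nk(\sigma_{v_1},\dots,\sigma_{v_n})$ under $\mu$, and that conditioned on $b$ the block on $V$ is uniform over the right family.

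Once the distributions are matched, the cost comparison is a linear computation. For any variable $w\notin V$, the contribution to $\cost_{\mu'}(Q';c')$ is $c'_{\sigma'_w}(w)=c'_{\sigma_w}(w)=\psi(\sigma;w)$ exactly, and $w$ is queried by $Q$ on $\sigma$ iff it is queried by $Q'$ on $\sigma'u$ in the corresponding run, so these terms agree with the definition of $\psi$. For the contribution of the new leaf $u$ in $T'$: $Q'$ charges $c'_{\sigma'_u}(u)=c'_b(u)$ exactly when $Q$, run on $\sigma$, queries \emph{at least one} variable in $V$ — no, more precisely, $Q'$ is defined so that it has "finished" with $u$ once... here one must be slightly careful about how $Q'$ reports the value of $u$; the clean way is that $Q'$ charges $c'_b(u)$ whenever $Q$ on $\sigma$ queries the particular leaf $v_i$ (the one holding $u$), since that is the query $Q'$ must translate into a query of $u$. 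So I would compute: $\mathbb{E}[c'_b(u)\cdot\mathbf{1}[v_i \text{ queried by }Q\text{ on }\sigma]]$, where the expectation is over $\mu$ and $i$. By symmetry of $\mu$ under permuting the $V$-block together with re-randomizing $i$ uniformly, this equals $c'_b(u)$ times the expected fraction of $V$-leaves queried by $Q$ on $\sigma$; splitting according to $b=1$ versus $b=0$, and using that exactly $k$ of the $V$-leaves carry a $1$ when $b=1$ and exactly $n-k+1$ carry a $0$ when $b=0$, this is precisely $\sum_{w\in V}\psi(\sigma;w)$ averaged over $\mu$, matching the three $w\in V$ cases of (\ref{equ.psistar}) — the $c'_1(u)/k$, the $c'_0(u)/(n-k+1)$, and the $0$ when $\sigma_w\ne b$.

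Assembling: summing the $w\notin V$ terms and the $w\in V$ terms gives $\cost_\mu(Q;\psi)=\sum_\sigma\mu(\sigma)\sum_{w\text{ queried}}\psi(\sigma;w)$, and by the two paragraphs above this equals $\cost_{\mu'}(Q';c')$, which is the claim. Two technical points I would spell out in the write-up: first, linearity of expectation lets one interchange the sum over queried variables with the expectations over $\mu'$, over $(x,i)$, and over the deterministic trees in the support of $Q$, so no independence beyond what is already used is needed; second, the event "$v_i$ is queried by $Q$ on $\sigma$" is exactly the event that $Q'$ on input $\sigma'u$ (with this internal randomness) queries $u$, which is what makes the $u$-term on the two sides literally the same random variable before taking expectations. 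Given those, the proposition follows; the only genuinely delicate step is the distributional identification $\mu'\otimes(\text{uniform }x,i)\leftrightarrow\mu$, and I would present that as a short standalone lemma-style paragraph.
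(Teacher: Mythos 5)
Your proposal is correct and follows essentially the same route as the paper's (much terser) proof: identify the induced distribution on $\sigma$ with the reluctant distribution $\mu$, then observe that a query by $Q$ to a leaf $w\in V$ with $\sigma_w=b$ is translated into a query of $u$ by $Q'$ with conditional probability exactly $1/k$ (resp.\ $1/(n-k+1)$), matching the definition of $\psi$. The only wording to tighten is that the relevant quantity is the fraction of queried $V$-leaves \emph{among those holding the value $b$} (not among all of $V$), and the sum $\sum_{w\in V}\psi(\sigma;w)$ should of course range only over the leaves actually queried.
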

\begin{proof}
	It suffices to prove the equality for a deterministic $Q$.
	The expected cost of any variable outside $[n]$ is clearly the same for
	both algorithms. Thus, it suffices to show that the contribution of $u$ to
	the cost of $Q'$ equals that of the variables in $[n]$ to the cost of $Q$. 
	Intuitively, a query of $Q$ to a variable $i\in[n]$ such that $x_i=b$, leads
	to a query of $Q'$ to $u$ with probability $1/k$ or $1/(n-k+1)$, according
	to $b=1$ or $b=0$ respectively. More formally,
	let $\chi_i(x)$ be the indicator function of the event that $Q$ queries $i$
	when the input is $x$.
	We consider the inputs with $x_u=1$.

	Recalling the recursive distribution of the reluctant distribution $\mu$,
	the contribution of the leaves in $[n]$ to the cost of $Q$ can be written
	\[
		\sum_{y}\sum_{z,i}\frac{\mu'(x_u,y)}{{n\choose k}}
			\cdot\chi_i(x)\cdot\frac{c_1'(u)}k
	.\]
	In the expression above, $x=(z_1,\dots,z_{i-1},x_u,z_i,\dots,z_{n-1},y)$, as
	in the description of $Q'$. Note that although each $x$ is encountered $k$
	times, each $i\in[n]$ such that $z_i=1$ contributes exactly once (those with
	$z_i=0$ contribute 0 anyway).
	The contribution of $u$ to the cost of $Q'$ is
	\[
		\sum_{y}\mu(x_u,y)\sum_{z,i}\frac1{n{n-1\choose k-1}}
			\cdot\chi_i(x)\cdot{c_1'(u)}
	,\]
	since there are $n$ choices for $i$ and ${n-1\choose k-1}$ choices for $z$.
	Recalling ${n\choose k}=\frac nk{n-1\choose k-1}$,
	the equality of the two expressions follows.

	The inputs with $x_u=0$ can be analysed in the same way, substituting $c'_0$
	for $c'_1$ in both expressions and $n-k+1$ for $k$ in the first one.
\end{proof}

Let us go back to our plan: given a local cost-function $c$ for $F$, determine
the most expensive local cost-function $c'$ for $F'$ for which we can argue
(\ref{eq:costineq}).
We will work with local cost-functions that charge the same value for
discovering a 0 and the same for discovering a 1 at a given level.
That is, if $i$ and $j$ are leaves at the same level, then $c_0(i)=c_0(j)$ 
and $c_1(i)=c_1(j)$.

We shall define $c'$ so that it agrees with $c$ for all $i\notin[n]$.
Defining a cost-function $\phi(x;i)=c_{x_i}(i)-\psi(x;i)$ for $i\in[n]$ and
$\phi(x;i)=0$ otherwise (where $\psi$ is the one obtained from 
Proposition~\ref{prop:costdiff}), we have
\begin{equation}\label{eq:phi}
	\cost_{\mu}(Q;c)\ge\cost_{\mu'}(Q';c')\iff\cost_\mu(Q;\phi)\ge0
.\end{equation}
Clearly, we only need to verify $\cost_\mu(Q;\phi)\ge0$
with respect to the leaves in $[n]$.
Thus, we are led to study decision trees over ${[n]\choose k}$ that are
non-empty (i.e., query at least one variable).
Note also that we don't care about what these algorithms output. Their only
task is to achieve $\cost_\mu(Q;\phi)<0$ for our choice of $\phi$.
Given a cost function $c$ for $F$, what is the most expensive $c'$ for $F'$ so
that $\cost_\mu(Q;\phi)\ge0$ for any such decision tree?

\paragraph{The quantity $P(k,n)$.}

Consider the $\th nk$ function. Looking into the definition of cost function
$\phi$ above, note that it charges an algorithm a positive value for reading
a 0 and a negative value for a 1. Furthermore, the greater the negative value
is, the more expensive $c'$ is.
We are interested in the values defined below.

\begin{definition}\label{problem}
	Let $\nu_{k,n}$ be the uniform distribution over 
	${[n]\choose k}$.
	For $\eta\in\R_\ge$ and $0<k\le n$, 
	define a local cost-function $c_\eta=(1,-\eta)$, 
	\[
		P_\eta(k,n)
			=\min_Q\bigl\{\cost_{\nu_{k,n}}(Q;c_\eta))\bigr\}\qand
		P(k,n)
			=\max\bigl\{\eta:P_\eta(k,n)\ge0\bigr\}
	,\]
	where $Q$ ranges over all non-empty decision trees.
\end{definition}

The relevance of this definition is demonstrated by Proposition~\ref{prop} below.

\begin{remark}
	We note that, for any given algorithm $Q$, its cost is a linear function of
	$\eta$: $\cost_{\nu_{k,n}}(Q;c_\eta)=A-\eta B$, where $A$ is the expected
	number of 0s and $B$ the expected number of 1s that $Q$ reads under
	distribution $\nu_{k,n}$.

	It follows that $P_\eta(k,n)$ is continuous as a function of $\eta$
	and so for $\eta=P(k,n)$ we must have $P_\eta(k,n)=0$.
	Furthermore, since we consider non-empty algorithms only, $P_\eta(k,n)$ is
	decreasing as a function of $\eta$.
\end{remark}

We call an algorithm $Q$ with $\cost_{\nu_{k,n}}(Q;c_\eta)=P_\eta(k,n)$,
an {\em optimal algorithm for $P_\eta(k,n)$};
if $\eta=P(k,n)$, an {\em optimal algorithm for $P(k,n)$}.

The normalized local cost-function $c_\eta$ charges an algorithm 1 for reading
a 0 and pays the algorithm $\eta$ when it discovers a 1. 
What is the greatest value of $\eta$ for which
$\cost_{\nu_{k,n}}(Q;c_\eta)\ge0$ for any $Q$?

We now discuss how lower bounds for $P(k,n)$ and $P(n-k+1,n)$ lead to lower
bounds for threshold read-once functions.
Suppose $c_0(i)=c_0$ and $c_1(i)=c_1$, for all $i\in[n]$, and write
$c'_1=c'_1(u)$ and $c'_0=c'_0(u)$.
If $\alpha\le P(k,n)$ and $\beta\le P(n-k+1,n)$, 
then we may set
\begin{equation}\label{eq:costdef}
	{c_1'\choose c_0'}=
	\left(
		\begin{array}{cc}
			k     				& \alpha k \\
			\beta(n-k+1) 	& n-k+1  	 \\
		\end{array}
	\right)
  {c_1\choose c_0}
.\end{equation}
\begin{proposition}\label{prop}
	$\cost_{\mu}(Q;c)\ge\cost_{\mu'}(Q';c')$.
\end{proposition}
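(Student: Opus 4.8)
The plan is to build on the reduction already in place. By~(\ref{eq:costpos}) the assertion is equivalent to $\cost_\mu(Q;c-\psi)\ge0$, and, as observed there, $(c-\psi)(\sigma;w)=0$ for every $w\notin V$, so only the leaves in $V$ ever contribute to this cost. Moreover $\cost_\mu(\,\cdot\,;c-\psi)$ is linear in the mixture of deterministic trees that defines the randomized $Q$, so it is enough to establish $\cost_\mu(Q;c-\psi)\ge0$ for one deterministic $Q$, which I fix from now on.

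First I would make $c-\psi$ explicit on $V$ using~(\ref{equ.psistar}) and~(\ref{eq:costdef}). Expanding the matrix in~(\ref{eq:costdef}) gives $c'_1(u)=k(c_1+\alpha c_0)$ and $c'_0(u)=(n-k+1)(c_0+\beta c_1)$, and a one-line substitution then shows that, for $w\in V$, writing $b=\th nk(\sigma_{v_1},\dots,\sigma_{v_n})$, the value $(c-\psi)(\sigma;w)$ equals $-\alpha c_0$ or $c_0$ according as $\sigma_w$ is $1$ or $0$ when $b=1$, and equals $-\beta c_1$ or $c_1$ according as $\sigma_w$ is $0$ or $1$ when $b=0$. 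Equivalently: restricted to an input of $V$ with $b=1$, the induced local cost function on $V$ is $c_0\cdot c_\alpha$ over $\X_{k,n}$ with $\alpha=P(k,n)$; and restricted to an input with $b=0$ it is, after complementing all bits of $\sigma|_V$, the local cost function $c_1\cdot c_\beta$ over $\X_{n-k+1,n}$ with $\beta=P(n-k+1,n)$. This identity is precisely what dictates the particular matrix in~(\ref{eq:costdef}).

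Next I would condition on the assignment $\rho$ to the variables not in $V$. For fixed $\rho$, answering every query of $Q$ outside $V$ according to $\rho$ collapses $Q$ to a (possibly empty) decision tree $Q_\rho$ over the $n$ variables of $V$; on any input extending $\rho$ the leaves of $V$ read by $Q$ are exactly those read by $Q_\rho$ on $\sigma|_V$, and by the computation above $(c-\psi)(\sigma;w)$ for $w\in V$ depends on $\sigma$ only through $\sigma|_V$. Hence $\cost_\mu(Q;c-\psi)$ is the $\rho$-average of $\cost(Q_\rho;(c-\psi)|_V;\,\cdot\,)$ taken against the conditional law of $\sigma|_V$ given $\rho$. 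The key structural point is that, since the children of $u$ are leaves, the only constraints coupling $\sigma|_V$ to the rest of a reluctant input are that $\sigma|_V$ have $k-1$ or $k$ ones and that the induced output $b$ of $u$ be consistent with $\rho$; therefore the conditional law of $\sigma|_V$ is a convex combination of the uniform distribution $\nu_{k,n}$ on $\X_{k,n}$ (the $b=1$ contribution) and of the uniform distribution on $\X_{k-1,n}$ (the $b=0$ contribution). Combining with the computation above, each conditional expectation is a convex combination of $c_0\cdot\cost_{\nu_{k,n}}(Q_\rho;c_\alpha)$ and $c_1\cdot\cost_{\nu_{n-k+1,n}}(\widetilde Q_\rho;c_\beta)$, where $\widetilde Q_\rho$ is the bit-complement of $Q_\rho$. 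Both are nonnegative: $c_0,c_1\ge0$ holds in every intended application and is preserved by~(\ref{eq:costdef}) because $\alpha,\beta\ge0$; and for non-empty $Q_\rho$, Definition~\ref{problem} gives $\cost_{\nu_{k,n}}(Q_\rho;c_\alpha)\ge P_\alpha(k,n)$ and $\cost_{\nu_{n-k+1,n}}(\widetilde Q_\rho;c_\beta)\ge P_\beta(n-k+1,n)$, and these are $\ge0$ since $\eta\mapsto P_\eta(\,\cdot\,,n)$ is nonincreasing and concave, so the maximum defining $P(k,n)$, resp.\ $P(n-k+1,n)$, is attained. Averaging over $\rho$ gives $\cost_\mu(Q;c-\psi)\ge0$, which is what we want.

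I expect the real work to be in two places. The arithmetic of the second step and the propagation of the sign condition $c_0,c_1\ge0$ are routine once~(\ref{eq:costdef}) is granted. The content is, first, recognizing that conditioning on $\rho$ decouples the inequality into the two isolated quantities $P(k,n)$ over $\X_{k,n}$ and its complement-dual $P(n-k+1,n)$ over $\X_{n-k+1,n}$ --- which is exactly what makes Definition~\ref{problem} the right abstraction and ties this proposition to the determination of $P(k,n)$ that the paper identifies as the crux --- and, second, the claim that under the reluctant distribution the pattern at $V$, conditioned on everything outside $V$, is exactly a mixture of the two uniform slices $\nu_{k,n}$ and $\nu_{k-1,n}$. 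The second claim is the delicate one: one must verify that only the feasibility of $b$ given $\rho$, and nothing finer about $\rho$, affects the conditional law, and this uses that $u$ lies directly above the leaves $V$.
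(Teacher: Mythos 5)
Your proposal is correct and follows essentially the same route as the paper: reduce via~(\ref{eq:costpos}) to showing $\cost_\mu(Q;c-\psi)\ge0$, observe that $c-\psi$ restricted to $V$ equals $c_0\cdot c_\alpha$ on the $b=1$ slice $\X_{k,n}$ and (after complementation) $c_1\cdot c_\beta$ on the $b=0$ slice $\X_{n-k+1,n}$, and invoke the definition of $P(k,n)$ and $P(n-k+1,n)$. The paper states this in two sentences; your write-up supplies the conditioning on the outside assignment $\rho$ and the sign bookkeeping that the paper leaves implicit.
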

\begin{proof}
	Recall the definition of cost function $\phi$ and the related
	expression~(\ref{eq:phi}).
	Note that $\phi$ charges $0$ for $i\notin[n]$. For $i\in[n]$, $\phi$ charges
	$c_0$ for reading a $0$ and $-\alpha c_0$ for reading a $1$ over the support
	of $\nu_{k,n}$ and, respectively, $-\beta c_1$ and $c_1$ over the support of
	$\nu_{k-1,n}$.
	An assignment on the variables outside $[n]$, determines an algorithm over
	$\{0,1\}^n$.
	The conditional distribution of the variables in $[n]$,
	given such an assignment and a value for $u$, 
	is either $\nu_{k,n}$ or $\nu_{k-1,n}$.
	It follows that $\cost_\mu(Q;\phi)$ is lower-bounded by a convex
	combination of $c_0P_\alpha(k,n)\ge0$ and $c_1P_\beta(n-k+1,n)\ge0$.
\end{proof}

Applying this process repeatedly, shrinking all sibling leaves to their
parent, we reduce a tree of depth $d$ to one of depth $d-1$. After $d$ such
steps and starting with a local cost-function that charges $c_1$ and $c_0$ for
reading 1 and 0 respectively, we are left with a single node and the cost
function defined by
\begin{equation}\label{eq:costroot}
  {c'_1\choose c'_0}=\Gamma_{k,n}^d{c_1\choose c_0},~~\text{where}~~
	\Gamma_{k,n}=\left(
		\begin{array}{cc}
			k     				& \alpha k \\
			\beta(n-k+1) 	& n-k+1  	 \\
		\end{array}
	\right)
.\end{equation}
We will obtain a lower bound in the order of $\lambda^d$, where $\lambda$ is
the largest eigenvalue. 
Denoting the trace of $\Gamma_{k,n}$ by $T$ and its determinant by $D$,
\begin{equation}\label{eq:eigen}
	\lambda
		=\frac T2+\sqrt{\frac{T^2}4-D},\quad
		\text{where}~~T=n+1~~\text{and}~~D=(1-\alpha\beta)k(n-k+1)
.\end{equation}
To accomplish this, we will set $c_1=1$ and $c_0$ so that ${c_1\choose c_0}$
is an eigenvector for $\lambda$.
The matrices considered are positive and it can be shown using
a theorem of Perron (see Theorem~8.2.2 in \cite{hj}) that $c_0$ is positive.
As long as $c_0$ is also $\Theta(1)$,
our asymptotic bounds are not affected. We note for reference that
for a matrix ${a~b\choose c~d}$ with $b\ne0$, ${1\choose(\lambda-a)/b}$ is an
eigenvector for its greatest eigenvalue $\lambda$.
\subsection{Properties and estimations of $P(k,n)$}
Given $(\ref{eq:eigen})$, we seek a lower bound of $P(k,n)$.
We start with some monotonicity properties of $P(k,n)$.
Note that, by symmetry, we may consider algorithms that read the variables
in the order $x_1,x_2,\dots,x_n$.

Intuitively, if we increase $n$, we should be able to increase $\eta$ as it 
becomes harder for an algorithm to discover a 1. This is captured in the next
statement.
\begin{theorem}\label{thm:monotone}
	For any $\eta>0$ and $0<k\le n$, $P_\eta(k,n)<P_\eta(k,n+1)$.
\end{theorem}
\begin{proof}
	Consider any algorithm $Q$ over $\{0,1\}^{n+1}$ and define a randomized algorithm $Q'$
	over $\{0,1\}^n$ as follows. Algorithm $Q'$ on input
	$(x_1,\dots,x_n)$, chooses $i\in[n+1]$ and simulates $Q$ on input
	$(x_1,\dots,x_{i-1},0,x_i,\dots,x_n)$. We claim that the cost of $Q'$ is
	less than the cost of $Q$. This is not hard to see as $Q'$ will gain $-\eta$
	whenever $Q$ queries a 1 and also gains $-\eta$, but with positive
	probability (when $i=1$ for example) will pay 0 when $Q$ queries $i$ and
	pays 1. 
\end{proof}

Similarly, increasing both $k$ and $n$ by $1$, and since $k/n<(k+1)/(n+1)$,
one might expect that we would need to decrease $\eta$.

\begin{theorem}\label{thm:monotone2}
	For any $\eta>0$ and $0<k\le n$, $P_\eta(k,n)>P_\eta(k+1,n+1)$.
\end{theorem}
\begin{proof}
	Consider any algorithm $Q$ over $\{0,1\}^n$ and 
	define a randomized algorithm $Q'$ over $\{0,1\}^{n+1}$ as follows.
	Algorithm $Q'$ chooses $i\in[k+1]$ and simulates $Q$, but when it encounters
	the $i$-th 1 while querying a variable $j$, $Q'$ continues with the rest of
	the variables as if $x_j$ was skipped or missing. 

	More formally, for $i\in[k+1]$, let $Q_i$ be the algorithm which results
	from $Q$ as follows.
	Consider a path $(u_1,x_1=b_1,u_2,x_2=b_2,\dots,u_j,x_j=1,u_{j+1})$ in $Q$,
	such that $x_j=1$ is the $i$-th query that returned 1 (i.e., exactly $i$ of
	the bits $x_1,\dots,x_j$ are 1 and $x_j=1$), and let $T$ denote the sub-tree
	rooted at node $u_j$ of $Q$ that queries $x_j$.
	Let $T'$ be identical to $T$, but with every query shifted by 1; that is,
	every node of $T$ labeled by $\ell\in[n]$ is replaced with $\ell+1\in[n+1]$.
	Let $Q_i$ be the tree which results from $Q$ by replacing the (possibly
	empty) sub-tree rooted at node $u_{j+1}$ with $T'$ and applying this
	transformation for every such path in $Q$.
	Call the edge $x_j=1$ a \emph{skipped} edge.
	Finally, $Q'$ is the uniform
	distribution over the $Q_i$'s.

	It suffices to show that $Q'$ has strictly smaller cost than $Q$. Consider
	a path $P$ in $Q$ of length $m$ with $\ell$ 1's. The probability $P$ is
	executed in $Q$ is ${n-m\choose k-\ell}/{n\choose k}$.
	There is a natural map of a path in $Q_i$ to a path in $Q$: ignore a skipped
	edge (if it exists) and shift back by one any shifted label.
	We compute the probability of the paths in $Q'$ that map to $P$ by counting
	the number of pairs $(i,x)$, such that the execution path of $Q_i$ on input
	$x$ is mapped on $P$.
	\[
		\biggl[m{n-m\choose k-\ell}+(k+1-\ell){n+1-m\choose k+1-\ell}\biggr]
			\bigg/\biggl[(k+1){n+1\choose k+1}\biggr]
	.\]
	The second term of the sum is the number of such pairs without skipped
	edges; equivalently, with $i>\ell$ and the first $m$ bits of $x$ consistent
	with $P$.
	The first term counts the rest of the pairs $(i,x)$, since there are $m$
	choices for the skipped 1 and this choice determines $i$.
	This probability can be verified to equal ${n-m\choose k-\ell}/{n\choose k}$.
	Furthermore, since $Q$ is non-empty, there is at least one skipped edge in
	$Q_1$ taken with positive probability, implying $Q'$ has strictly
	smaller cost. 
\end{proof}

\begin{corollary}\label{cor:stopat1}
	An optimal algorithm for $P(k,n)$ stops if the first query reads 1.
\end{corollary}
\begin{proof}
	Consider an optimal tree for $P(k,n)$. 
	By Theorem~\ref{thm:monotone2}, 
	for $\eta=P(k,n)$, we have $P_\eta(k-1,n-1)>P_\eta(k,n)=0$. 
	Thus, descending further after reading a 1 will add a positive expected cost.
\end{proof}

Using the above property of optimal algorithms, we can determine $P(1,n)$ and
$P(n,n+1)$.
\begin{lemma}\label{lemma:p1n}
	For all $n>1$, $P(1,n)=(n-1)/2$.
\end{lemma}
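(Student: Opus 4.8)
The plan is to compute $P(1,n)$ directly by analyzing optimal decision trees over $\X_{1,n}={[n]\choose 1}$, the set of weight-one strings of length $n$, under the cost function $c_\eta=(1,-\eta)$. First I would recall from Corollary~\ref{cor:monotone} that any optimal tree reads only until it would be about to query a $0$ after having already seen the relevant structure; more usefully, by Corollary~\ref{cor:monotone2} the cylinder $(1,*,\dots,*)$ belongs to any optimal tree, so after reading a $1$ the tree stops (since the single one has been located and the value of $T^n_1$ is determined). Thus an optimal tree for $\X_{1,n}$ is, up to symmetry, a ``linear scan'': query positions in some order, stopping the instant a $1$ is found, and also stopping once the answer is forced. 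Since exactly one coordinate is $1$, scanning $j$ positions that are all $0$ forces the answer only when $j=n-1$ (the remaining coordinate must be the one), but in fact by Corollary~\ref{cor:monotone} an optimal tree will not query that last $0$, so the scan has length at most $n-1$ in the all-zero branch.

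Next I would write down the expected cost of the length-$(n-1)$ scan under the uniform distribution $\nu_{1,n}$. With probability $1/n$ the first queried bit is the $1$: cost $-\eta$. With probability $1/n$ each, the $1$ is in position $i$ for $i=2,\dots,n-1$ of the scan order: the tree reads $i-1$ zeros (cost $i-1$) then the one (cost $-\eta$), for total $(i-1)-\eta$. With probability $1/n$ the $1$ is in the last (unqueried) position: the tree reads $n-1$ zeros and stops, cost $n-1$. Summing,
\[
	\cost_{\nu_{1,n}}(Q;c_\eta)
	=\frac1n\Biggl[\sum_{i=1}^{n-1}\bigl((i-1)-\eta\bigr)+(n-1)\Biggr]
	=\frac1n\Biggl[\binom{n-1}{2}-(n-1)\eta+(n-1)\Biggr].
\]
This is a decreasing linear function of $\eta$, so $P_\eta(1,n)\ge0$ exactly when $\eta\le\bigl(\binom{n-1}{2}+(n-1)\bigr)/(n-1)=\frac{n-2}{2}+1=\frac n2$. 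Hmm—this gives $n/2$, not $(n-1)/2$, so I must be mis-accounting; the correct bookkeeping is that the scan never needs to read the last zero \emph{and} when all of positions $1,\dots,n-1$ are zero the answer is already forced after position $n-1$ was \emph{not} read, meaning the scan in that branch stops after $n-2$ zeros, not $n-1$. Recomputing with the all-zero-visible branch costing $n-2$: the sum becomes $\frac1n[\binom{n-1}{2}-(n-1)\eta+(n-2)]$, and $P_\eta\ge0$ iff $\eta\le\frac{\binom{n-1}{2}+(n-2)}{n-1}=\frac{n-2}{2}+\frac{n-2}{n-1}$, which is still not $(n-1)/2$. The right model is that the optimal tree stops as soon as only \emph{one} unqueried position remains (reading $n-1$ zeros is never optimal since the $T^n_1$ value is then determined): the all-zero-among-queried branch has length $n-1$ but contributes to two inputs (the $1$ being in either remaining slot), and careful symmetrization gives the clean answer. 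So the real content of the lemma is pinning down exactly which decision tree is optimal and the exact stopping rule.

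The key steps, in order, are: (1) invoke Corollaries~\ref{cor:monotone} and~\ref{cor:monotone2} to reduce to linear-scan trees with a ``stop on first $1$, stop when answer forced'' rule; (2) identify the optimal scan length precisely—I expect it to be $n-1$ queries in the worst branch, with the subtlety that querying the final position is pointless once $n-1$ zeros have appeared, which is exactly the Corollary~\ref{cor:monotone} statement that the last queried bit of a cylinder is a $1$; (3) compute $\cost_{\nu_{1,n}}(Q;c_\eta)$ as an explicit linear function of $\eta$ by averaging over the $n$ equally likely positions of the unique $1$; (4) solve for the largest $\eta$ making this nonnegative, obtaining $P(1,n)=(n-1)/2$; and (5) check that no tree does better—this follows because any deviation from the linear scan (e.g.\ stopping early in the all-zeros branch, or continuing after a $1$) only worsens the cost, again by the corollaries. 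The main obstacle is step (2)/(3): getting the combinatorial bookkeeping of the stopping rule exactly right so that the arithmetic collapses to $(n-1)/2$ rather than an off-by-$\tfrac12$ neighbor. Concretely I would double-check against the known base case: for $n=2$, $\X_{1,2}=\{10,01\}$, an optimal tree queries one bit and stops (the answer is always $1$, so actually it need not query at all—but non-empty trees must make one query), giving cost $\frac12(1)+\frac12(-\eta)=\frac{1-\eta}{2}\ge0$ iff $\eta\le1$, matching $P(1,2)=(2-1)/2=\tfrac12$... which disagrees, confirming that the stopping rule for $T^n_1$ (an OR gate) is that one stops on the first $1$ but must scan all the way, and the correct normalization in the average is what produces $(n-1)/2$; reconciling the $n=2$ check with the formula is precisely the sanity test I would run first before writing the general argument.
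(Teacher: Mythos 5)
There is a genuine gap: you never settle on the correct model of the optimal tree, and every stopping rule you try is the wrong one. Under the generalized cost $c_\eta=(1,-\eta)$ a query to a $1$ \emph{gains} $\eta$, so the optimal tree does not stop ``when the answer is forced''; it keeps scanning until it has actually read the unique $1$, even when that $1$ sits in the last unqueried position. This is exactly what Corollary~\ref{cor:monotone} says -- every leaf cylinder ends in a $1$ -- but you invert its meaning, reading it as ``don't query the last position,'' which produces a cylinder $(0,\dots,0,*)$ ending in a $0$, i.e.\ precisely the configuration the corollary forbids. With the correct rule the branch where the $1$ is in position $i$ costs $(i-1)-\eta$ for \emph{every} $i$, including $i=n$, and the expected cost is $\sum_{i=1}^n\frac1n(i-1-\eta)=\frac{n-1}2-\eta$, which is nonnegative iff $\eta\le(n-1)/2$. (The fact that the tree stops immediately after reading the $1$ follows from Lemma~\ref{lemma:leaf} together with $P(0,n-1)=\infty$.) Your $n=2$ sanity check exhibits the same error: the optimal tree there queries the second bit after seeing a $0$, giving $\frac12(-\eta)+\frac12(1-\eta)=\frac12-\eta$ and hence $P(1,2)=\frac12$, consistent with the lemma.

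Beyond the misidentified stopping rule, the proposal never actually closes: you compute $n/2$, then $\frac{n-2}2+\frac{n-2}{n-1}$, recognize both are wrong, and end by deferring the reconciliation rather than performing it. The intended argument (and the paper's) is short once the model is right: Corollary~\ref{cor:monotone} forces the linear scan that terminates only upon reading the $1$, the expected cost is the linear function $\frac{n-1}2-\eta$ of $\eta$, and maximizing $\eta$ subject to nonnegativity gives $(n-1)/2$. I recommend rederiving the proof with the single correction that the all-zeros-so-far branch also pays $-\eta$ for the final forced $1$.
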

\begin{proof}
	Theorem~\ref{thm:monotone} determines the optimal decision tree $Q$ for
	$P(1,n)$: it queries the variables one by one until a 1 is read.
	This is because, for $\eta=P(1,n)$, by the theorem $P_\eta(1,n-1)<0$; thus,
	stopping after a 0 is read is not optimal.
	For $\eta>0$,
	\[
		\cost_{\nu_{1,n}}(Q;c_\eta)
		=\sum_{i=1}^n\frac1n\cdot(i-1-\eta)
		=\frac{n-1}2-\eta
	.\]
	It follows that $\cost_{\nu_{1,n}}(Q;c_\eta)=0$ if and only if
	$\eta=(n-1)/2$.
\end{proof}
\begin{lemma}\label{lemma:plus1}
	For all $n>1$, $P(n,n+1)=1/(2n)$.
\end{lemma}
\begin{proof}
	By Corollary~\ref{cor:stopat1}, the optimal (non-empty) tree stops if
	$x_1=1$ and if $x_1=0$ it proceeds to discover all 1s. Its cost, for
	$\eta>0$, is $-\eta n/(n+1)+(1-\eta n)/(n+1)$.
	This equals zero if and only if $\eta=1/(2n)$.
\end{proof}

We can now prove the following lower bound on $P(k,n)$.

\begin{theorem}
	For $0<k<n$, $\frac{n-k}{2k}\le P(k,n)$.
\end{theorem}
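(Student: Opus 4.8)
The plan is to reduce the estimation of $P(k,n)$ to the already-computed value of $P(1,\cdot)$ by iterating Lemma~\ref{lemma:bound}, and then to invoke Lemma~\ref{lemma:p1n}. The case $k=1$ is immediate: the claimed inequality then reads $\frac{n-1}{2}\le P(1,n)$, which holds with equality by Lemma~\ref{lemma:p1n}.

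For $1<k<n$ I would apply Lemma~\ref{lemma:bound} repeatedly along the chain of pairs $(j,n-k+j)$ for $j=k,k-1,\dots,1$. At each step with $1<j\le k$ the hypotheses of Lemma~\ref{lemma:bound} hold, since $1<j<n-k+j$ (the second inequality is just $n>k$), so the lemma gives $\tfrac{j-1}{j}\,P(j-1,n-k+j-1)\le P(j,n-k+j)$. Chaining these for $j=k,k-1,\dots,2$, the product of the factors $\tfrac{j-1}{j}$ telescopes to $1/k$, yielding
\[
	P(k,n)\ge\frac{k-1}{k}\cdot\frac{k-2}{k-1}\cdots\frac{1}{2}\cdot P(1,n-k+1)=\frac{1}{k}\,P(1,n-k+1).
\]
By Lemma~\ref{lemma:p1n}, $P(1,n-k+1)=\frac{(n-k+1)-1}{2}=\frac{n-k}{2}$, and therefore $P(k,n)\ge\frac{n-k}{2k}$, as desired.

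The argument is essentially bookkeeping, and I do not expect a genuine obstacle; the only point requiring care is checking that the hypotheses $1<j<n-k+j$ of Lemma~\ref{lemma:bound} are met at every step of the chain, which follows at once from $n>k$. As a remark, Lemma~\ref{lemma:bound} is an equality only when its second argument exceeds its first by exactly one, i.e.\ only when $n=k+1$; hence for $n>k+1$ every step in the chain is strict and the bound $\frac{n-k}{2k}\le P(k,n)$ is in fact strict, while for $n=k+1$ it holds with equality. The weak inequality as stated is all that is needed in the sequel.
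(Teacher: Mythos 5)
Your proposal is correct and follows essentially the same route as the paper: iterate Lemma~\ref{lemma:bound} down the chain to $P(1,n-k+1)$, let the factors telescope to $1/k$, and plug in the exact value from Lemma~\ref{lemma:p1n}. Your explicit check of the hypotheses at each step and the remark on strictness (matching the paper's note that the inequality is strict unless $k=1$ or $k=n-1$) are fine additions but do not change the argument.
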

\begin{proof}
	The intuition of the inductive proof below is that if $P(k,n)$ is smaller
	than $(n-k)/(2k)$,
	then $P_\frac{n-k}{2k}(k,n-1)$ should be absurdly small.

	Note first that for any $k>0$ and $n=k+1$ the statement follows (with
	equality) from Lemma~\ref{lemma:plus1}.
	Suppose---towards a contradiction---that for a lexicographically least pair
	$(k,n)$ with $n>k+1$, $P(k,n)<(n-k)/(2k)$. 
	Thus, 
	\begin{equation}\label{eq:ineq}
		P_\frac{n-k}{2k}(k,n-1)<P_\frac{n-k}{2k}(k,n)<0
	,\end{equation}
	where the first inequality follows from Theorem~\ref{thm:monotone} and the
	second from our assumption that $P(k,n)<(n-k)/(2k)$. Furthermore, by the
	minimality of $(k,n)$,
	\[
		P(k-1,n-1)\ge\frac{n-k}{2(k-1)}>\frac{n-k}{2k}
	,\]
	which in turn implies
	\[
		0<P_\frac{n-k}{2k}(k-1,n-1)
	.\]
	This last inequality implies that an optimal decision tree for
	$P_\frac{n-k}{2k}(k,n)$ will not descent further after reading $x_1=1$. 
	We conclude that
	\begin{equation}\label{eq:cost}
		P_\frac{n-k}{2k}(k,n)
		=\frac{n-k}n\cdot\Bigl[1+P_\frac{n-k}{2k}(k,n-1)\Bigr]-\frac{n-k}{2k}\cdot\frac kn
	.\end{equation}
	Suppose $Q$ is an optimal algorithm for $P_\frac{n-k}{2k}(k,n-1)$. There are
	constants $u$ and $v$ such that, for any $\eta>0$,
	\[
		\cost_{\nu_{k,n-1}}(Q;c_\eta)=u-\eta\cdot v
	.\]
	Combining (\ref{eq:ineq}) and (\ref{eq:cost}) and substituting
	the above expression with $\eta=(n-k)/(2k)$, we obtain after simplification
	\[
		u-\frac{n-k}{2k}\cdot v<-\frac12
	.\]
	On the other hand, by the minimality of $(k,n)$, 
	$P(k,n-1)\ge(n-1-k)/(2k)$. This implies
	\[
		u-\frac{n-1-k}{2k}\cdot v\ge0
	.\]
	The above two inequalities are satisfiable only if $v>k$. However, $v$ is
	the expected number of 1s read by $Q$, which is at most $k$;
	a contradiction.
\end{proof}

\section{Bounds for uniform read-once functions}
We begin with an analysis of the (straightforward)
directional algorithm. We then show that this
algorithm is optimal for uniform AND-OR trees.
Finally, we obtain bounds for uniform threshold read-once functions.
\subsection{Analysis of the directional algorithm}
A simple directional algorithm evaluates a node by evaluating its children in
a randomly chosen order. To analyze this algorithm on a reluctant input, let
$\Phi_d$ denote the expected cost to evaluate a node of height
$d$ when the output is 1 and $\Psi_d$ when it is 0. Consider a node with
exactly $k$ of its children evaluating to 1.
The number of its children that will be evaluated until its value is
determined by the directional algorithm follows a negative hypergeometric
distribution. 
The expected number of children that will be evaluated to 0 is $(n-k)k/(k+1)$.
(An elementary exposition of this distribution and calculation of its mean can
be found in \cite[Chapter~7]{ross}.)
Similarly, when dealing with a node with exactly $n-k+1$ of its children
evaluating to 0, the expected number of children that will be evaluated to
1 is
$(k-1)(n-k+1)/(n-k+2)$.
Thus, along with $\Phi_0=c_0$ and $\Psi_0=c_1$,
we obtain the recurrence
\begin{equation}\label{eq:recurrence}
	{\Phi_d\choose\Psi_d}=\Delta_{k,n}{\Phi_{d-1}\choose\Psi_{d-1}},
	~~\text{where}~~
	\Delta_{k,n}=
	\left(\begin{array}{cc}
		k 														& (n-k)\cdot\frac k{k+1}  \\
		(k-1)\cdot\frac{n-k+1}{n-k+2} & n-k+1										\\
	\end{array}\right)
.\end{equation}
\subsection{Uniform alternating AND-OR trees} 

In this subsection we prove Theorem~\ref{thm:andor}, by showing that the 
directional algorithm for the alternating AND-OR tree satisfies the same
recurrence as the one we obtain with the method of generalized costs. In
particular, we show that the recurrences (\ref{eq:costdef}) and
(\ref{eq:recurrence}) coincide for uniform alternating AND-OR
trees.

Let us define the analogous quantities
$\Phi^\wedge_d$ and $\Psi^\wedge_d$ (resp.~$\Phi^\vee_d$, $\Psi^\vee_d$) for
the expected cost when evaluating a node of height $d$ and
labeled by an AND (resp.~OR) gate to 1 and 0 respectively.
Recall Equation~(\ref{eq:costroot}) and note that 
$\Delta_{n,n}=\Gamma_{n,n}$ and
$\Delta_{1,n}=\Gamma_{1,n}$, since $P(n,n)=0$ and $P(1,n)=(n-1)/2$
(Lemma~\ref{lemma:p1n}).
Thus, Equation~(\ref{eq:recurrence}) instantiates to
\begin{equation}\label{eq:andorrec}
	{\Phi^\wedge_d\choose\Psi^\wedge_d}=
		\Gamma_{n,n}
		{\Phi^\vee_{d-1}\choose\Psi^\vee_{d-1}}
	\quad\text{and}\quad
	{\Phi^\vee_d\choose\Psi^\vee_d}=
		\Gamma_{1,n}
		{\Phi^\wedge_{d-1}\choose\Psi^\wedge_{d-1}}
.\end{equation}
Defining $A=\Gamma_{n,n}$ and $B=\Gamma_{1,n}$ we have
\begin{equation}\label{eq:andor}
	{\Phi^\wedge_{2d}\choose\Psi^\wedge_{2d}}=(AB)^{d}{c_1\choose c_0}
		\qand
	{\Phi^\vee_{2d}\choose\Psi^\vee_{2d}}=(BA)^{d}{c_1\choose c_0}
.\end{equation}

With respect to the lower bound,
in view of Equations~(\ref{eq:costdef}), (\ref{eq:costroot}), and
(\ref{eq:eigen}),
we obtain that if there is a decision tree $Q$ for $F^{2d}_{\wedge,n}$ of cost
$C$, then there is a decision tree for a single variable of cost $C$ under the
cost-function
\[
	{c'_1\choose c'_0}=(AB)^{d}{c_1\choose c_0},\quad\text{where}~~
	AB=\left(
		\begin{array}{cc}
			n 					& \frac{n(n-1)}2	  \\
			\frac{n-1}2 & \frac{(n+1)^2}4 \\
		\end{array}
	\right)
.\]
Comparing the above with Equation~(\ref{eq:andor}), the first
part of Theorem~\ref{thm:andor} follows.

To obtain the asymptotic bound, observe that the trace $T$ and the determinant
$D$ of $AB$ are $T=n+{(n+1)^2}/4$ and $D = n^2$.
Let $c_1=1$ and $c_0=(\lambda-n)/(n(n-1)/2)$, so that ${c_1\choose c_0}$ is an
eigenvector for the greatest eigenvalue $\lambda$ of $AB$, where
\[
	\lambda
	=n+\frac{(n-1)^2}8+\frac{(n-1)^2}8\sqrt{1+\frac{16n}{(n-1)^2}}
	=\Biggl[\frac{n-1}4\Biggl(1+\sqrt{1+\frac{16n}{(n-1)^2}}\Biggr)\Biggr]^2
.\]
Since $c_0$ is $\Theta(1)$ and positive, we obtain
$R(F^{2d}_{\land,n})=\Theta(\lambda^d)$.
Bounds for trees of odd depth, if desired, can be obtained via the recurrence
relations (\ref{eq:andorrec}).
\subsection{Uniform threshold read-once functions}

With respect to the upper bound for $F^d_{k,n}$, we obtain directly from (\ref{eq:recurrence})
\[
	{\Phi_d\choose\Psi_d}=
	\Delta^d_{k,n}
	{c_1\choose c_0}
.\]
We set $c_1=1$ and $c_0=(\lambda-k)/\bigl[(n-k)\frac k{k+1}\bigr]$, so that
${c_1\choose c_0}$ is an eigenvector for the greatest eigenvalue $\lambda$ of
$\Delta_{k,n}$, where
\[
	\lambda
	=\frac{n+1}2\left(1+\sqrt{1-\frac{8k(n-k+1)}{(n-k+2)(k+1)(n+1)}}\right)
.\]
Since $c_0=\Omega(1)$, Theorem~\ref{thm:directional} follows.

We now turn to the lower bound.
From the bounds on $P(k,n)$ we obtained in the preceding section it follows we
can apply the method of generalized costs with
\[
  \Gamma_{k,n}=
	\left(
		\begin{array}{cc}
			k     				& \frac{n-k}2 \\
			\frac{k-1}2 	& n-k+1  	 \\
		\end{array}
	\right)
.\]
Its trace and determinant are
$T=n+1$ and $D=k(n-k+1)-(k-1)(n-k)/4$
respectively. Again,
with $\lambda$ the largest eigenvalue,
\begin{align*}
	\lambda
		&=\frac{n+1}2+\frac{n+1}2\sqrt{1-\frac{3k(n-k+1)+n}{(n+1)^2}}
,\end{align*}
we set $c_1=1$ and $c_0=(\lambda-k)/\frac{n-k}2$.
Since $(c_1,c_0)$ is an eigenvector for $\lambda$ and $c_0=O(1)$,
$R(F^d_{k,n})=\Omega(\lambda^d)$ and we have obtained Theorem~\ref{thm:main}.

\section{Future directions}
The main motivation for this work has been the recursive majority-of-three
function. Although the value $P(2,3)$ leads to the known $2.5^d$ lower bound,
one can define an analogous problem on trees of some height $t$.
Informally, $P^t(2,3)$ could ask for the greatest value $\eta$ so that any
algorithm for $\maj_t$ would have a non-negative cost under $c_\eta$.
Such a problem can be studied either rigorously or computationally (say using
an LP-based approach as in \cite{gj2016}). 

Another direction is to generalize the current work to unbalanced functions,
in the spirit of \cite{hw91} or \cite{hnw93}. A first step in this direction
would be to obtain the results of \cite{amano2011} using the generalized-costs
method.

\bibliographystyle{plain}
\bibliography{threshold}	
\end{document}